\renewcommand{\epsilon}{\varepsilon}
\renewcommand{\rho}{\varrho}
\newtheorem{remark}{Remark}
\newtheorem{theorem}{Theorem}
\newtheorem{prop}{Proposition}
\title{Detection time of Dirac particles in one space dimension}
\author{A. Shadi Tahvildar-Zadeh and Stephanie Zhou\\
\small Department of Mathematics, Rutgers (New Brunswick)}
\date{May 2023}
\begin{document}

\maketitle
\begin{abstract}
     We consider particles emanating from a source point inside an interval in one-dimensional space and passing through detectors situated at the endpoints of the interval that register their arrival time. 
     Unambiguous measurements of arrival or detection time are problematic in the orthodox narratives of quantum mechanics, since time is not a self-adjoint operator. By contrast, the arrival time at the boundary of a particle whose motion is being guided by a wave function through the deBroglie-Bohm guiding law is well-defined and unambiguous, and can be computationally feasible provided the presence of detectors can be modeled in an effective way that does not depend on the details of their makeup.  
     We use an absorbing boundary condition for Dirac's equation (ABCD) proposed by Tumulka, which is meant to simulate the interaction of a particle initially inside a domain with the detectors situated on the boundary of the domain. By finding an explicit solution, we prove that the initial-boundary value problem for Dirac's equation satisfied by the wave function is globally well-posed, the solution inherits the regularity of the initial data, and depends continuously on it. 
     We then consider the case of a pair of particles emanating from the source inside the interval, and derive explicit formulas for the distribution of first arrival times at each detector, which we hope can be used to study issues related to non-locality in this setup. 
     \begin{center}
         Dedicated to the memory of Detlef D\"urr
     \end{center}
\end{abstract}

\section{Introduction}
In orthodox quantum mechanics, speaking of arrival/detection time of quantum particles is fraught with problems, since time does not lend itself to the self-adjoint operator formalism required for that approach\footnote{It is possible to define time more generally as a {\em positive operator-valued measure} \cite{Tum09}, but there is no unique way of doing that \cite{VHD13}.}.  Nevertheless, the arrival/detection time of a particle is something that  is routinely measured in time-of-flight (TOF) experiments performed in labs \cite{KM96,KPM97,PK97}.  There are numerous competing recipes in the literature for what the  distribution of arrival/detection times should be (see \cite{DD19,DND19,DS21,DasNoe21} for a critique of some of these approaches, and the possibility of using experiments to distinguish them.)  Many of these approaches ignore the presence of detection devices and consider the wave function of the particle to evolve unitarily, either under the free evolution or in presence of an external potential. As a result, they are prone to the ``backflow" problem, which can cause their proposed candidates for the arrival time probability density to become negative at points close to the source.   An alternative approach was taken by Tumulka \cite{TUM22,Tumulka,Tum22several}, based on an idea of Werner \cite{Wer87}, in which the presence of the detector is modeled through the imposition of an absorbing boundary condition on the Schr\"odinger flow (in the non-relativistic case) or the Dirac flow (in the relativistic case) of the wave function of the particle. Such a boundary condition ensures that particle velocities at the boundary are always outward. Tumulka showed that under such a boundary condition, his candidate for the probability density of the particle's arrival time is always non-negative.  Subsequently, Teufel and Tumulka succeeded in showing \cite{TeuTum19} that the corresponding boundary value problem for the wave function has a unique square-integrable solution, in both the relativistic and the non-relativistic cases.  Their existence proof uses techniques from functional analysis, and does not yield an explicit formula for the solution in either case.  

In this note we show that in one space dimension, the initial-boundary-value problem for the Dirac equation satisfied by the wave function of  a single spin-half particle, with absorbing boundary conditions corresponding to a pair of {\em ideal} detectors\footnote{See \cite{Tumulka} for the definition of ideal vs. non-ideal detectors.} placed at the two endpoints of an interval containing the particle source,  
is exactly solvable, and that the solution inherits the regularity of the initial data.  Applying Bohmian Mechanics rules then makes it possible for actual particle trajectories to be computed for any particle whose initial position is distributed randomly according to any given initial wave function, thereby setting the stage for comparisons to be made with other proposals for arrival time distribution, and the possibility of experimental testing of this theory.  

It is of interest to study how successful the ABCD method is in simulating actual detection, i.e. the interaction of the particle with the (presumably macroscopic) detecting apparatus that results in the device registering the presence of the particle at a particular time.  In order to avoid faster-than-light signaling, it must be the case that shifting one of the detectors by a small amount does not alter the distribution of arrival times at the other detector, or at least not before any possible light signal from the shifted detector has time to reach the other one. (Recall that in Bell-type experiments, changing the direction of the magnets affects only the correlations between the distributions, not the distributions themselves.) For a single particle, this is easily verified.  
We then derive explicit formulas for the distribution of arrival times at each detector, which in forthcoming work we plan to use to show that such superluminal signaling does not exist even when the initial wavefunction corresponds to a maximally entangled two-particle state.

\section{Absorbing boundary conditions for the Dirac equation}
Using the proposed equations in \cite{Tumulka}, we let $\Omega = (-L,L)$ be an open interval in $\mathbb{R}$ and let $\psi = \begin{pmatrix} \psi_+\\ \psi_-\end{pmatrix}:[0,\infty)\times\Omega\rightarrow\mathbb{C}^{2}$ be the unique solution of the initial-boundary value problem (IBVP)
\begin{equation}\label{rodi_eq}
\begin{cases}
\begin{array}{rcll}
     ic\hbar\gamma^{\mu}\partial_{\mu}\psi & = & mc^{2}\psi & \\
     \psi(0,s) & = & \mathring{\psi}(s); & s\in\Omega,\qquad \mathring{\psi}\in C^k_{c}(\Omega),\ k\geq 0 \\
     \mathbf{n}(s)\cdot\mathbf{\alpha}\ \psi(t,s) & = & \psi(t,s); & t\geq 0,s\in\partial\Omega.
\end{array}
\end{cases}
\end{equation}
Here $\{\gamma^0,\gamma^1\}$ are Dirac gamma matrices, $m$ is the rest mass of the spin-1/2 particle, $c$ is the speed of light in vacuum, $\hbar$ is Planck's constant,  $\mathbf{n}$ is the normal to $\partial \Omega$, and $\mathbf{\alpha} = \alpha^1 := \gamma^0\gamma^1$.

The initial data $\mathring{\psi}$ is assumed to be $C^k$, for a fixed integer $k\geq 0$, and compactly supported inside the interval $\Omega$.  The data can therefore be extended outside $\Omega$ to be identically zero. In the following, when speaking of $\mathring{\psi}$ we always have this extension in mind.

The boundary of the spacetime domain for $\psi$ is the set of points $\{(t,-L),(t,L)\}$. So we have $\mathbf{n}=1$ at $(t,L)$, $\mathbf{n}=-1$ at $(t,-L)$.  Choosing $\gamma^0 = \begin{pmatrix} 0 & 1\\ 1 & 0\end{pmatrix}$, $\gamma^1 = \begin{pmatrix} 0 & -1\\ 1 & 0\end{pmatrix}$, we have
 $\mathbf{\alpha}=\begin{pmatrix}1 & 0 \\ 0 & -1\end{pmatrix}$.

Plugging these into the boundary condition in \eqref{rodi_eq}, which we call an {\em Absorbing Boundary Condition for the Dirac equation} (ABCD), we get $\psi_{+}(t,L)=0$ and $\psi_{-}(t,-L)=0$ for all $t\geq 0$. Now plugging these two boundary conditions into the Dirac equation, we get the additional boundary condition $m\psi_{\mp}(t,\pm L)=\mp i\partial_{s}\psi_{\pm}(t,\pm L)$.
This gives us an equivalent IBVP for the Klein-Gordon equation obtained by iterating the Dirac operator in \eqref{rodi_eq}:
\begin{equation}\label{ibvp_arrival}
\begin{cases}
\begin{array}{rcl}
     \partial_{t}^{2}\psi_{\pm}-\partial_{s}^{2}\psi_{\pm}+m^{2}\psi_{\pm} & = & 0 \\
     \psi_{\pm}(0,s) & = & \mathring{\psi}_{\pm}(s) \in C^k_c(\Omega), \ k\geq 0 \\
     \partial_{t}\psi_{\pm}(0,s) & = & -im\mathring{\psi}_{\mp}(s)\pm\partial_{s}\mathring{\psi}_{\pm}(s) \\
     \psi_{\pm}(t,\pm L)& = & 0 \\
     m\psi_{\mp}(t,\pm L)\pm i\partial_{s}\psi_{\pm}(t,\pm L) & = & 0
\end{array}
\end{cases}
\end{equation}
(We have set $c=\hbar=1$.)\newline
The following theorem shows that \eqref{ibvp_arrival} has an explicit solution by splitting the IBVP into its initial and boundary value problem parts and finding the solutions to each. Although this solution is in the form of an infinite series, we shall see that at any fixed time there are only finitely many nonzero terms in it, so that there are no convergence issues.

\begin{theorem}
The IBVP in \eqref{ibvp_arrival} has a unique solution that is as regular as its initial data and depends continuously on it.
\end{theorem}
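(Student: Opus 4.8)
The plan is to produce the solution in closed form by splitting \eqref{ibvp_arrival} into a pure Cauchy problem on all of $\mathbb{R}$ and a boundary correction built by the method of images, and then to read regularity and continuous dependence off the resulting formula while obtaining uniqueness (and with it global existence in the energy class) from a flux identity that is the analytic content of the word ``absorbing.''

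\emph{Cauchy part.} First I would extend $\mathring\psi_\pm$ by zero and solve, for each component separately, the Cauchy problem for $\partial_t^2 u-\partial_s^2 u+m^2 u=0$ on $\mathbb{R}$ with data $u(0,\cdot)=\mathring\psi_\pm$ and $\partial_t u(0,\cdot)=-im\mathring\psi_\mp\pm\partial_s\mathring\psi_\pm$ by the classical Riemann representation: the d'Alembert terms plus a correction integrated against the (entire) kernels $J_0\!\left(m\sqrt{t^2-(s-\xi)^2}\right)$ and $J_1\!\left(m\sqrt{t^2-(s-\xi)^2}\right)$. Call the result $\psi^{\mathrm I}$. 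Because the $\partial_s\mathring\psi_\pm$ in the velocity data sits under an integral in d'Alembert's formula (and the Bessel kernels are smooth), this expression makes sense and is as regular as $\mathring\psi$ even when $k=0$; moreover finite propagation speed forces $\psi^{\mathrm I}$ to coincide with the sought solution as long as the backward light cone stays inside $\Omega$, i.e. for $0\le t<L-a$ when $\operatorname{supp}\mathring\psi\subset[-a,a]$.

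\emph{Boundary part.} Using the Dirac equation, the ABCD reduce to requiring that one spinor component vanish at each endpoint together with a first-order relation between the components there, and these are precisely the relations that a reflected wave must cancel. I would therefore generate a sequence of image data from $\mathring\psi$ by successive reflections across $s=L$ and $s=-L$ — each reflection also swapping the two components and inserting the sign dictated by the boundary relation — push each image through the same Riemann formula, and sum. Since $\mathring\psi$ is compactly supported and signals travel at unit speed, for every fixed $t$ only finitely many images meet the backward domain of dependence of $[0,t]\times\overline{\Omega}$, so the series is locally finite: no convergence issue, and the sum is $C^k$ because each term is (reflections are affine and preserve compact support, so every compatibility condition at the corners $(0,\pm L)$ is vacuous). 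One then checks term by term that the total $\psi=\psi^{\mathrm I}+\psi^{\mathrm B}$ satisfies the PDE, the initial conditions, and both boundary conditions of \eqref{ibvp_arrival}.

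\emph{Uniqueness, stability, and the hard point.} For uniqueness I would return to the equivalent first-order Dirac system and use the local conservation law $\partial_t(\psi^\dagger\psi)+\partial_s(\psi^\dagger\alpha\,\psi)=0$; integrating over $\Omega$ and imposing $\mathbf{n}\cdot\alpha\,\psi=\psi$ on $\partial\Omega$ turns the boundary flux into $-|\psi(t,L)|^2-|\psi(t,-L)|^2\le 0$, so $\|\psi(t,\cdot)\|_{L^2(\Omega)}$ is non-increasing; applied to a difference of solutions this gives uniqueness, and the same a priori bound confirms that the constructed solution is global. Continuous dependence is then built in: on each slab $[0,T]\times\overline{\Omega}$ the solution map $\mathring\psi\mapsto\psi$ is a finite sum of compositions of the bounded Riemann integral operator with affine reflections, hence bounded in the relevant $C^k$ norms. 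I expect the real work to be in the boundary part — pinning down the exact reflection/swap/sign rule on the spinor components so that the image series satisfies the ABCD at both endpoints simultaneously, and verifying that the bulk mass coupling propagates through the reflections without breaking it. Carrying this out on the first-order Dirac form, where the characteristics are two straight families and a reflection acts transparently on the boundary data, should keep the bookkeeping under control, and the compact support of $\mathring\psi$ disposes of the usual corner-regularity headaches.
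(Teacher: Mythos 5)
Your overall architecture matches the paper's: you split $\psi$ into a free Cauchy evolution of the zero-extended data (the paper's $\Phi_\pm$, given by exactly the Riemann/Bessel representation you describe) plus a boundary correction with zero initial data, and you observe that finite propagation speed makes the correction a locally finite sum over ``bounces,'' which is also how the paper's answer is organized (its $n$-th term is supported in a region between $t=2nL$ and the light cones). Your energy argument is a genuine addition: the paper never spells out uniqueness or the a priori $L^2$ bound, whereas your flux identity $\partial_t(\psi^\dagger\psi)+\partial_s(\psi^\dagger\alpha\psi)=0$ with non-positive boundary flux under the ABCD gives uniqueness and global control cleanly, and is the right way to close that part of the theorem.

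The gap is in the one place you yourself flag as ``the real work'': the reflection rule. For $m>0$ the boundary correction is \emph{not} obtained by pushing affinely reflected, component-swapped, sign-flipped copies of the initial data through the free propagator. The ABCD at $s=L$ consists of a Dirichlet condition on one component \emph{together with} the derived Robin-type relation $m\psi_\mp(t,\pm L)\pm i\partial_s\psi_\pm(t,\pm L)=0$ coupling the components through the mass. An odd image of one component's Cauchy data can enforce the Dirichlet part, but the coupled relation then prescribes a nontrivial \emph{time-dependent boundary trace} for the other component (proportional to $\partial_s$ of the first component at the wall), and the solution of the half-line Klein--Gordon problem with such inhomogeneous boundary data is a time-convolution against a derivative of the fundamental solution --- not the free evolution of any affinely reflected datum. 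This is visible in the paper's Laplace-domain computation: the per-bounce ``reflection coefficient'' is $\pm i\frac{k-p}{m}e^{-2kL}$ with $k=\sqrt{p^2+m^2}$, a genuinely frequency-dependent factor, so the $n$-th bounce is a convolution of the boundary trace $f_\pm$ with a Bessel kernel $R_n$ whose order grows with $n$. Your rule as stated (``reflections are affine and preserve compact support'') is correct only at $m=0$, where the components decouple along straight characteristics; for $m>0$ no choice of swap and sign makes it satisfy both boundary relations simultaneously, and the claimed term-by-term $C^k$ regularity would then be resting on a formula that does not solve the problem. The paper's Laplace transform, partial fractions, and geometric-series expansion are precisely the machinery that produces the correct (convolutional) bounce kernels; some equivalent computation --- e.g., a Duhamel representation for the inhomogeneous Dirichlet half-line problem iterated between the two walls --- is needed to make your boundary part work.
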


\begin{proof}
In order to solve these equations we  set $\psi_{\pm}=\Phi_{\pm}+\chi_{\pm}$, where $\Phi_{\pm}$, defined on $[0,\infty)\times (-\infty,\infty)$, are the solutions to
\begin{equation}\label{kg}
\begin{cases}
\begin{array}{rcl}
     \partial_{t}^{2}\Phi_{\pm}-\partial_{s}^{2}\Phi_{\pm}+m^{2}\Phi_{\pm} & = & 0 \\
     \Phi_{\pm}(0,s) & = & \mathring{\psi}_{\pm}(s) \\
     \partial_{t}\Phi_{\pm}(0,s) & = & -im\mathring{\psi}_{\mp}(s)\pm\partial_{s}\mathring{\psi}_{\pm}(s),
\end{array}
\end{cases}
\end{equation}
and where $\chi_{\pm}$ are functions defined on $[0,\infty)\times (-\infty,L)$, resp. $[0,\infty)\times (-L,\infty)$ that satisfy
\begin{equation}\label{chipm}
\begin{cases}
\begin{array}{rcl}
      \partial_{t}^{2}\chi_{\pm}-\partial_{s}^{2}\chi_{\pm}+m^{2}\chi_{\pm} & = & 0 \\
      \chi_{\pm}(0,s) & = & 0 \\ 
      \partial_{t}\chi_{\pm}(0,s) & = & 0 \\ 
      \chi_{\pm}(t,\pm L) & = & f_{\pm}(t) \\
      m\chi_{\mp}(t,\pm L)\pm i\partial_{s}\chi_{\pm}(t,\pm L) & = & i \partial_t f_\pm(t),
\end{array}
\end{cases}
\end{equation}
where $f_{\pm}(t):=-\Phi_{\pm}(t,\pm L)$.\newline
\eqref{kg} is the Cauchy problem for the one-dimensional Klein-Gordon equation, in older literature called the (lossless) {\em telegraph} equation (e.g. \cite{CHII}, p.544), whose solution is well-known (see e.g. \cite{Holden}):
\begin{equation}\label{sol_phi}
    \begin{split}
    \Phi_{\pm}(t,s) & =\frac{1}{2}[\mathring{\psi}_{\pm}(s-t)+\mathring{\psi}_{\pm}(s+t)]    -\frac{tm}{2}\int_{s-t}^{s+t}\frac{J_{1}(m\sqrt{t^{2}-(s-\sigma)^{2}})}{\sqrt{t^{2}-(s-\sigma)^{2}}}\mathring{\psi_{\pm}}(\sigma)d\sigma \\ & -\frac{1}{2}\int_{s-t}^{s+t}J_{0}(m\sqrt{t^{2}-(s\newline -\sigma)^{2}})(im\mathring{\psi}_{\mp}(\sigma)\mp\partial_{\sigma}\mathring{\psi}_{\pm}(\sigma))d\sigma,
    \end{split}
\end{equation}
where $J_n$ is the Bessel function of order $n$.
Using integration by parts, the above can be rewritten as follows:
\begin{equation}\label{sol_phi_byparts}
    \Phi_{\pm}(t,s) =\mathring{\psi}_{\pm}(s\pm t) -\frac{m}{2}\int_{s-t}^{s+t} Z_1(t,\pm(s-\sigma))
   \mathring{\psi_{\pm}}(\sigma)d\sigma  -\frac{im}{2}\int_{s-t}^{s+t}Z_{0}(t,s-\sigma)\mathring{\psi}_{\mp}(\sigma)d\sigma,
\end{equation}
where we define, for $\nu\geq 0$,
\begin{equation}
    Z_\nu(t,s) :=  J_{\nu}(m\sqrt{t^{2}-s^{2}})\left(\frac{t- s}{t+ s}\right)^{\nu/2}.
\end{equation}
On the other hand, we can solve \eqref{chipm} using Laplace transform methods, as follows:
We first find the general solution to the first three equations of \eqref{chipm}:
\begin{equation}\label{gensol}
\begin{array}{rcl} 
    \Tilde{\chi}_{+}(p,s) & = & c_{1}(p)e^{ks}+c_{2}(p)e^{-ks} \\
    \Tilde{\chi}_{-}(p,s) & = & c_{3}(p)e^{ks}+c_{4}(p)e^{-ks},
\end{array}
\end{equation} 
where
\begin{equation}
     k := \sqrt{p^{2}+m^{2}},\qquad \mbox{Re}(p)>0,
\end{equation}
and tilde denotes the Laplace transform, i.e. $\tilde{\Psi}(p) = \int_0^\infty f(t)e^{-p t} dt$.
Applying this to the boundary conditions in \eqref{chipm}, we obtain
\begin{equation}\label{gensystem}
\begin{array}{rcl} 
    \tilde{f}_{+}(p) & = & c_{1}(p)e^{kL}+c_{2}(p)e^{-kL} \\
    \tilde{f}_{-}(p) & = & c_{3}(p)e^{-kL}+c_{4}(p)e^{kL} \\
    ip\tilde{f}_{+}(p) & = & m(c_{3}(p)e^{kL}+c_{4}(p)e^{-kL})+i(kc_{1}(p)e^{kL}-kc_{2}(p)e^{-kL}) \\
    ip\tilde{f}_{-}(p) & = & m(c_{1}(p)e^{-kL}+c_{2}(p)e^{kL})+i(-kc_{3}(p)e^{-kL}+kc_{4}(p)e^{kL}).
\end{array}
\end{equation} 

After solving the system \eqref{gensystem} for $c_{1},c_{2},c_{3},c_{4}$ and plugging back into \eqref{gensol}, we arrive at
{\small
\begin{eqnarray*}
       \Tilde{\chi}_{+}(p,s) & = & \frac{\Tilde{f}_{+}(m^{2}e^{6kL}+2k^{2}e^{2kL}-m^{2}e^{2kL}+2kpe^{2kL})+i\Tilde{f}_{-}(kme^{4kL}-mpe^{4kL}+km+mp)}{D}e^{k(L+s)} \\ 
& + &\frac{\Tilde{f}_{+}(2k^{2}e^{4kL}-2kpe^{4kL}-m^{2}e^{4kL}+m^{2})+i\Tilde{f}_{-}(mpe^{6kL}-kme^{6kL}-mpe^{2kL}-kme^{2kL})}{D}e^{k(L-s)}
\\
       \Tilde{\chi}_{-}(p,s) & = & \frac{i\Tilde{f}_{+}(mpe^{6kL}-kme^{6kL}-mpe^{2kL}-kme^{2kL})+\Tilde{f}_{-}(2k^{2}e^{4kL}-2kpe^{4kL}-m^{2}e^{4kL}+m^{2})}{D}e^{k(L+s)} \\ 
& + & \frac{i\Tilde{f}_{+}(kme^{4kL}-mpe^{4kL}+km+mp)+\Tilde{f}_{-}(m^{2}e^{6kL}+2k^{2}e^{2kL}-m^{2}e^{2kL}+2kpe^{2kL})}{D}e^{k(L-s)},
\end{eqnarray*} 
}
where $D:=m^2(e^{2kL}+i\frac{k+p}{m})(e^{2kL}-i\frac{k+p}{m})(e^{2kL}+i\frac{k-p}{m})(e^{2kL}-i\frac{k-p}{m})$.

After doing a partial fraction decomposition, 
the above can be rewritten as

\begin{equation}
  \label{post-decomposition}  
    \Tilde{\chi}_{\pm}(p,s) = \frac{1}{2}
    \bigg[
    e^{-k(L\mp s)}\Big(\pm\frac{\Tilde{f}_{+}(p)-\Tilde{f}_{-}(p)}{1+i\frac{k-p}{m}e^{-2kL}}+\frac{\Tilde{f}_{+}(p)+\Tilde{f}_{-}(p)}{1-i\frac{k-p}{m}e^{-2kL}}\Big)+ i\frac{k-p}{m}e^{-k(L\pm s)}\Big(\pm\frac{\Tilde{f}_{+}(p)-\Tilde{f}_{-}(p)}{1+i\frac{k-p}{m}e^{-2kL}}- \frac{\Tilde{f}_{+}(p)+\Tilde{f}_{-}(p)}{1-i\frac{k-p}{m}e^{-2kL}}\Big)
    \bigg].
 \end{equation}   

Noting that $\left|i\frac{k-p}{m}e^{-2kL}\right|<1$, we can view \eqref{post-decomposition} as the sum of four convergent geometric series, so that

\begin{equation}\label{series}
\begin{array}{rcl} 
  \begin{aligned}
      \Tilde{\chi}_{\pm}(p,s) & = 
     \frac{1}{2}
     \bigg[
     \pm\sum_{n=0}^{\infty}\Big(-i\frac{k-p}{m}\Big)^{n}e^{-k[(2n+1)L\mp s]}(\Tilde{f}_{+}(p)-\Tilde{f}_{-}(p)) 
     +\sum_{n=0}^{\infty}\Big(i\frac{k-p}{m}\Big)^{n}e^{-k[(2n+1)L\mp s]}(\Tilde{f}_{+}(p)+\Tilde{f}_{-}(p)) \\ &
       \mp\sum_{n=0}^{\infty}\Big(-i\frac{k-p}{m}\Big)^{n+1}e^{-k[(2n+1)L\pm s]}(\Tilde{f}_{+}(p)-\Tilde{f}_{-}(p))  
       -\sum_{n=0}^{\infty}\Big(i\frac{k-p}{m}\Big)^{n+1}e^{-k[(2n+1)L\pm s]}(\Tilde{f}_{+}(p)+\Tilde{f}_{-}(p)) 
       \bigg] 
   \end{aligned}
\end{array}
\end{equation} 

From a table of inverse Laplace transforms (e.g. \cite{Oberhettinger}, formula 14.52) we find
$$
     \mathcal{L}^{-1}[(\frac{(k-p)^{\nu}}{k})e^{-kx}](\tau,x) = m^{\nu} (\frac{\tau-x}{\tau+x})^{\frac{1}{2}\nu}J_{\nu}(m\sqrt{\tau^{2}-x^{2}})H(\tau-x) = m^\nu Z_\nu(\tau,x)H(\tau - x)
$$
where $H$ is the Heaviside function $H(t) =1$ for $t>0$ and $0$ otherwise.\newline
We therefore have the following explicit solution for $\chi_{\pm}$:

\begin{equation}\label{transform}
\begin{array}{rcl} 
    \begin{aligned}
        \chi_{\pm}(t,s) & = 
        \frac{1}{2}\sum_{n=0}^{\infty} i^{n}
        \Bigg[
        H(t-(2n+1)L\pm s)\bigg((-1)^{n}      \frac{d}{ds}  \int_{(2n+1)L\mp s}^{t} d\xi \ F_{-}(t-\xi) 
        Z_{n}(\xi,\pm((2n+1)L-s))  \\ & 
        \hspace{2.1in}\pm      \frac{d}{ds}  \int_{(2n+1)L\mp s}^{t} d\xi \ F_{+}(t-\xi)  Z_{n}(\xi,\pm((2n+1)L-s))\bigg)  \\ & 
        +iH(t-(2n+1)L\mp s)\bigg((-1)^{n+1}    \frac{d}{ds}   \int_{(2n+1)L\pm s}^{t} d\xi \ F_{-}(t-\xi)  
        Z_{n+1}(\xi,\pm((2n+1)L-s)) \\ & 
        \hspace{1.7in}\pm    \frac{d}{ds}   \int_{(2n+1)L\pm s}^{t} d\xi \ F_{+}(t-\xi)  Z_{n+1}(\xi,\pm((2n+1)L-s))\bigg),
        \Bigg]
    \end{aligned}
\end{array}
\end{equation} 
where
$$
F_\pm := f_+ \pm f_-.
$$
Carrying out the differentiations in \eqref{transform}, we have
\begin{equation}\label{diff}
    \begin{split}
         \chi_{\sigma}(t,s) & = H(t-L+\sigma s)f_{\sigma}(t-L+\sigma s) \\ & 
         \hspace{.2in}+\sum_{n=0}^{\infty} i^{n}\Bigg(H(t-(2n+1)L+\sigma s)\int_{(2n+1)L-\sigma s}^{t}f_{(-1)^{n}\sigma}(t-\xi)R_{n}(\xi,(2n+1)L-\sigma s) d\xi \\ & 
         \hspace{.45in}-iH(t-(2n+1)L-\sigma s)\int_{(2n+1)L+\sigma s}^{t}f_{(-1)^{n+1}\sigma}(t-\xi)R_{n+1}(\xi,(2n+1)L+\sigma s)d\xi\Bigg),
    \end{split}
\end{equation}
where $\sigma\in\{+,-\}$, and for $k\geq 0$,
$$
    R_{k}(\xi,\eta) := -m\eta (\xi-\eta)^{k}\frac{J_{k+1}(m\sqrt{\xi^{2}-\eta^{2}})}{(\xi^{2}-\eta^{2})^{\frac{k+1}{2}}}
    +k(\xi-\eta)^{k-1}\frac{J_{k}(m\sqrt{\xi^{2}-\eta^{2}})}{(\xi^{2}-\eta^{2})^{\frac{k}{2}}}.
$$
Adding $\chi_{\pm}$ to $\Phi_{\pm}$, we arrive at the solution for $\psi_{\pm}$.
This solution can clearly be put in terms of the initial data, as shown for example by the $n=0$ term of $\chi_\sigma$, which we denote by ${\chi_\sigma}_{0}$:
{\small
\begin{equation}\label{chi0}
    \begin{split}
        {\chi_{\sigma}}_{0}(t,s) & = -H(t-L+\sigma s)\Bigg[\Phi_{\sigma}(t-L+\sigma s,\sigma L)+\int_{L-\sigma s}^{t}R_{0}(\xi,L-\sigma s)\Bigg(\mathring{\psi}_{\sigma}(\sigma(L+t-\xi)) \\ &
        \hspace{.2in}-\frac{m}{2}\int_{\sigma L-t+\xi}^{\sigma L+t-\xi}Z_1(t-\xi,\sigma L - \varsigma)\mathring{\psi}_{\sigma}(\varsigma) d\varsigma 
        -\frac{im}{2}\int_{\sigma L-t+\xi}^{\sigma L+t-\xi}Z_{0}(t-\xi,\sigma L-\varsigma)\mathring{\psi}_{-\sigma}(\varsigma)d\varsigma \Bigg) d\xi\Bigg]  \\ &
        +iH(t-L-\sigma s)\int_{L+\sigma s}^{t}R_{1}(\xi,L+\sigma s)\Bigg(\mathring{\psi}_{-\sigma}(-\sigma(L+t-\xi)) \\ &
        \hspace{.2in}-\frac{m}{2}\int_{-\sigma L-t+\xi}^{-\sigma L+t-\xi}Z_1(t-\xi,-\sigma L-\varsigma)\mathring{\psi}_{-\sigma}(\varsigma) d\varsigma 
        -\frac{im}{2}\int_{-\sigma L-t+\xi}^{-\sigma L+t-\xi}Z_{0}(t-\xi,\sigma L+\varsigma)\mathring{\psi}_{\sigma}(\varsigma)d\varsigma \Bigg)d\xi
    \end{split}
\end{equation}
}
Furthermore, the Heaviside functions appearing in \eqref{diff} show that $\chi_{n}$, the $n$-th term in the summation, is supported in $\bigcup_{k=0}^{n}\mathcal{R}_{k}$, with regions $\mathcal{R}_{k}$ shown in Fig.~\ref{regions}.  This shows that for any fixed $t>0$ there are only finitely many non-zero terms in \eqref{diff}, so convergence is never an issue. 

Finally, the behavior of Bessel functions for small values of their argument, $J_n(x) = O(x^n)$ as $x\to 0$, imply that, despite appearances, the solution kernels $Z_k$ and $R_k$ are smooth and bounded on compact domains, which ensures that the solution is as regular as the data and depends continuously on it.
\end{proof}
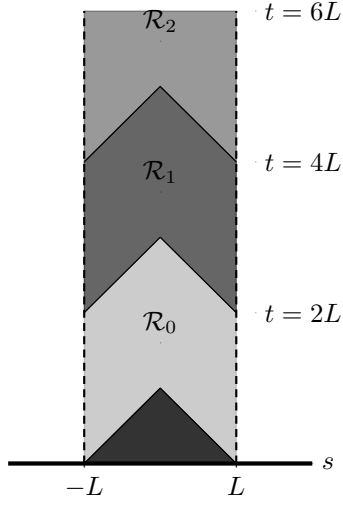
\begin{figure}
    \centering
    \begin{tikzpicture}[scale=0.50]
    \draw[black, ultra thick] (-4,0) -- (4,0);
    \draw[black, thin] (-2,-.1) -- (-2,.1);
    \draw[black, thin] (2,-.1) -- (2,.1);
    \draw[black, thick, densely dashed] (-2,0) -- (-2,12);
    \draw[black, thick, densely dashed] (2,0) -- (2,12);
    
    \draw[black, thin] (-2,0) -- (0,2);
    \draw[black, thin] (0,2) -- (2,0);
    \draw[black, thin] (-2,4) -- (0,6);
    \draw[black, thin] (0,6) -- (2,4);
    \draw[black, thin] (-2,8) -- (0,10);
    \draw[black, thin] (0,10) -- (2,8);
    
    \draw[fill=black, opacity=.8] (-2,0) -- (0,2) -- (2,0) -- cycle;
    \draw[fill=black, opacity=.2] (-2,0) -- (-2,4) -- (0,6) -- (2,4) -- (2,0) -- (0,2) -- cycle;
    \draw[fill=black, opacity=.6] (-2,4) -- (-2,8) -- (0,10) -- (2,8) -- (2,4) -- (0,6) -- cycle;
    \draw[fill=black, opacity=.4] (-2,8) -- (-2,12) -- (2,12) -- (2,8) -- (0,10) -- cycle;
    
    \filldraw[black] (4,0) circle (0pt) node[anchor=west] {$s$};
    \filldraw[black] (-2,-.1) circle (0pt) node[anchor=north] {$-L$};
    \filldraw[black] (2,-.1) circle (0pt) node[anchor=north] {$L$};
    \filldraw[black] (2.5,4) circle (0pt) node[anchor=west] {$t=2L$};
    \filldraw[black] (2.5,8) circle (0pt) node[anchor=west] {$t=4L$};
    \filldraw[black] (2.5,12) circle (0pt) node[anchor=west] {$t=6L$};
    
    \filldraw[black] (0,3.2) circle (0pt) node[anchor=south] {$\mathcal{R}_{0}$};
    \filldraw[black] (0,7.2) circle (0pt) node[anchor=south] {$\mathcal{R}_{1}$};
    \filldraw[black] (0,11.2) circle (0pt) node[anchor=south] {$\mathcal{R}_{2}$};
    
    \end{tikzpicture}
    \caption{Support of $\chi_n$}
    \label{regions}
\end{figure}
\section{Arrival Time of Bohmian Trajectories}

Let $\psi = \begin{pmatrix} \psi_+\\ \psi_-\end{pmatrix}$ be the wave function of a spin-1/2 particle in $\mathbb{R}^{1,1}$.  The {\em Dirac current} 
\begin{equation}\label{def:j}
    j^\mu := \overline{\psi} \gamma^\mu \psi,\qquad \overline{\psi} := \psi^\dagger \gamma^0
\end{equation}
is the simplest Lorentz vector that can be constructed from the Dirac bispinor $\psi$.  When $\psi$ satisfies the Dirac equation $-i\gamma^\mu \partial_\mu \psi + m \psi = 0$, it follows that the vectorfield $j$ is divergence free, i.e. 
\begin{equation}\label{divj}
    \partial_\mu j^\mu  = 0.
\end{equation}
Setting $\rho(t,s) := j^0 = \psi^\dagger \psi$ and $J(t,s) := j^1 = \psi^\dagger \alpha^1 \psi$,  the above can be written as $\partial_t \rho + \partial_s J = 0$, which has the form of an equation of continuity for a density $\rho$. The quantity $v(t,s) = J/\rho$ is thus naturally a velocity field defined on the 1-particle configuration space.  

Let $Q(t)$ denote the actual position of the particle at time $t$. According to the principles of Bohmian Mechanics (e.g. \cite{DGZ13}, Chap. 9) the guiding equation for the motion of the particle is
\begin{equation}\label{guiding}
    \frac{dQ(t)}{dt}=\frac{J(t,Q(t))}{\rho(t,Q(t))}.
\end{equation}
The above ODE can be uniquely solved given initial data $Q_0 = Q(0) \in \mathbb{R}$ which is assumed to be distributed randomly according to the initial density $\rho_0 := \mathring{\psi}^\dagger \mathring{\psi}$.  The {\em arrival time} of the trajectory $s= Q(t)$ at the boundary $\partial\Omega$ of the domain is simply 
\begin{equation}
    T = \inf\{t>0\ |\ Q(t) \in \partial\Omega\}
\end{equation}
(Recall that the infimum of the empty set is by definition $+\infty$.)





Let $j^{\mu}$ be as in \eqref{def:j}. Let $Z=(T,\mathbf{X})$ where  $\mathbf{X}$ is the location on $\partial\Omega$ where a particle gets first detected at time $T$, meaning there is a trajectory $Q(t)$ with $Q(0)\in \Omega$ that at time $t=T$ reaches the boundary for the first time, and $Q(T) = \mathbf{X}$. 
(If $T = \infty$, we write $Z=\infty$.) $Z$ is a random variable defined on $\Omega$, and according to \cite{Tumulka} the  distribution $\mu$ of $Z$ satisfies
    \begin{equation}\label{mu}
     \mu(t_{1}\leq T<t_{2},\boldsymbol{X}\in\partial\Omega) = \int_{t_{1}}^{t_{2}}dt\int_{\partial\Omega}\boldsymbol{n}(\boldsymbol{x})\cdot\boldsymbol{j}^{\psi_{t}}(\boldsymbol{x})d\sigma = \int_{t_{1}}^{t_{2}}\left(j^{1}(t,L)-j^{1}(t,-L)\right)dt.
     \end{equation}
It was shown in \cite{Tumulka} that
when the absorbing boundary conditions are satisfied, the integrand 
in \eqref{mu} is a probability density function on $\mathbb{R}^+$. Moreover, since in this case the boundary is made up of just two points, call them $A$ for Alice and $B$ for Bob, the distribution density of arrival times at Alice's detector on the left and for Bob's detector on the right are
\begin{equation}
    \rho_A(t) = -j^1(t,-L) =  |\psi_+(t,-L)|^2,\qquad \rho_B(t) = j^1(t,L) = |\psi_-(t,L)|^2.
\end{equation}
\begin{remark}
We note that these are not normalized probability density functions, since, as was shown in \cite{Tumulka}, a simple application of the Divergence Theorem implies that

\begin{equation}\label{idk}
         \int_{-L}^{L}|\mathring{\psi}(s)|^{2}ds-\int_{-L}^{L}|\psi(t,s)|^{2}ds = \int_{0}^{t}\left(j^{1}(T,L)-j^{1}(T,-L)\right)dT = \int_{0}^{t}\rho_B(T)+\rho_A(T)\ dT.
\end{equation}
Moreover,  since $\mathring{\psi}$ is assumed to be compactly supported in $[-L,L]$, by the Born Rule the left hand side of \eqref{idk} is $$1-\mbox{Prob}(T=t,X\in\Omega).$$
 Therefore, letting $t \to \infty$ we can only conclude that the {\em sum} of the total integrals of $\rho_A$ and $\rho_B$ is {\em at most} equal to one.
 \end{remark}

\section{Two-body problem}
Let us now imagine that the particle source in the middle of the interval emits {\em pairs} of particles.  The two body multi-time wave function $\psi = (\psi_{--},\psi_{-+},\psi_{+-},\psi_{++})$ for this problem has four components, each one of which is a function of two time and two space variables:  $\psi_{\sigma_1\sigma_2} = \psi_{\sigma_1\sigma_2}(x_1,x_2)$ with $x_i^0 = t_i, x_i^1 = s_i$, $i=1,2$.  It satisfies the multi-time system of Dirac equations
\begin{equation}\label{multiDir}
 i\gamma_j^{\mu}\partial_{x_j^\mu}\psi  =  m\psi   \qquad j = 1,2
\end{equation}
where $\gamma_1^\mu := \gamma^\mu \otimes \mathbf{1}$ and $\gamma_2^\mu := \mathbf{1}\otimes \gamma^\mu$.  
The above equations are amended by the boundary conditions
\begin{equation}\label{2boundary}
    \psi_{\pm\sigma_2}(t,\pm L,t,s) = 0,\qquad \psi_{\sigma_1\pm}(t,s,t,\pm L) = 0,\qquad \forall \sigma_1,\sigma_2\in \{+,-\}, \forall t \geq 0, \forall s \in[-L,L]
\end{equation}
The two-body tensor current $j$ is defined as
$
        j^{\mu\nu} = \overline{\psi}(\gamma^{\mu}\otimes\gamma^{\nu})\psi
$
where $\overline{\psi} := \psi^\dagger \gamma^0\otimes \gamma^0$.  It follows that
\begin{eqnarray}
    j^{00} & = & |\psi_{--}|^2 + |\psi_{-+}|^2 + |\psi_{+-}|^2 + |\psi_{++}|^2 \\
    j^{01} & = & |\psi_{--}|^2 - |\psi_{-+}|^2 + |\psi_{+-}|^2 - |\psi_{++}|^2 \\
    j^{10} & = & |\psi_{--}|^2 + |\psi_{-+}|^2 - |\psi_{+-}|^2 - |\psi_{++}|^2 
\end{eqnarray}
When $\psi$ satisfies the multi-time Dirac system \eqref{multiDir} the tensor current $j^{\mu\nu}$ satisfies a pair of conservation laws:
\begin{equation}\label{2cons}
    \partial_{t_1} j^{00} + \partial_{s_1} j^{10} = 0,\qquad \partial_{t_2} j^{00} + \partial_{s_2} j^{01} = 0.
\end{equation}
Suppose that there is a particle detector at each endpoint of the interval $\Omega$.  The detector at left endpoint is controlled by Alice, and the detector at the right endpoint is controlled by Bob.  Alice and Bob can change the position of their detectors independently of one another. 
Let $(x^\mu) = (t,s)$ denote a coordinate frame on physical spacetime with respect to which Alice and Bob are both stationary, and let $\Sigma_t$ be the foliation of spacetime by constant $t$-slices.
According to the Hypersurface Bohm-Dirac theory \cite{DGMZ99}, with respect to $\Sigma_t$ the guiding equation for each of the two particles is
\begin{equation}
    \frac{dQ_1}{dt} = \frac{ j^{10}}{j^{00}}(t,Q_1(t),t,Q_2(t)),\qquad \frac{dQ_2}{dt} = \frac{ j^{01} }{j^{00}}(t,Q_1(t),t,Q_2(t)).
\end{equation}
 Suppose the source emits a pair of particles at $t=0$. Given a trajectory of the two-body system $(Q_1(t),Q_2(t))$, its first arrival time chez Alice is the earliest time that either of the two particles reaches Alice's detector, and similarly for Bob.  
 In other words
\begin{equation}
    T_A := \min_{k=1,2} \inf\{t>0\ | \  Q_k(t)\leq -L\}, \qquad  T_B := \min_{k=1,2} \inf\{t>0\ |\  Q_k(t) \geq L\}.
\end{equation}
$T_A$ and $T_B$ are random variables on $\Omega$ (due to their dependence on the initial positions of the trajectories) describing the first of the two arrival times at Alice's detector, resp. Bob's.  Let $\mu_A$ and $\mu_B$ denote their corresponding (un-normalized) probability densities.  We have
\begin{prop}
    \begin{eqnarray}       
    \label{alice_dist}
        \mu_A(t) & = & \int_{-L}^L\left(\sum_{\sigma_2} |\psi_{+\,\sigma_2}(t,-L,t,s')|^2 + \sum_{\sigma_1}|\psi_{\sigma_1\,+}(t,s',t,-L)|^2\right)\ ds'\nonumber \\ && \mbox{}+ \int_0^t dt' \left(|\psi_{-+}(t',L,t,-L)|^2 + |\psi_{+-}(t,-L,t',L)|^2\right) \\ 
    \label{bob_dist}
        \mu_B(t) & = & \int_{-L}^L\left(\sum_{\sigma_2} |\psi_{-\,\sigma_2}(t,L,t,s')|^2 + \sum_{\sigma_1}|\psi_{\sigma_1\,-}(t,s',t,L)|^2\right)\ ds'\nonumber\\ && \mbox{} + \int_0^t dt' \left(|\psi_{-+}(t,L,t',-L)|^2 + |\psi_{+-}(t',-L,t,L)|^2\right).
    \end{eqnarray}
\end{prop}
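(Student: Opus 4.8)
The plan is to reduce the two-body statement to the one-particle result of the Theorem, exploiting the fact that the absorbing conditions \eqref{2boundary} make every detection irreversible: the two-body Bohmian trajectory, followed until its first detection and then continued as a one-body trajectory for the surviving particle, never returns probability to a region it has already left. First I would set up a continuity equation on the configuration square $\Omega\times\Omega$: on the equal-time slice $t_1=t_2=t$, writing $\rho:=j^{00}(t,s_1,t,s_2)$ and adding the two conservation laws in \eqref{2cons} gives $\partial_t\rho+\partial_{s_1}j^{10}+\partial_{s_2}j^{01}=0$. Evaluating $j^{10}$ and $j^{01}$ on the four edges with the help of \eqref{2boundary} --- e.g. on $\{s_1=-L\}$ the components $\psi_{-\sigma_2}$ vanish, so $-j^{10}=\sum_{\sigma_2}|\psi_{+\sigma_2}(t,-L,t,s_2)|^2\ge 0$, and similarly on the other three edges --- shows that the outward normal flux is nonnegative everywhere on $\partial(\Omega\times\Omega)$, exactly as in the one-dimensional case. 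Hence probability only leaves the open square, and integrating the continuity equation over $[0,t]\times\Omega\times\Omega$ together with the two-body guiding equations identifies the flux $\Phi_A(t)$ through the two ``Alice edges'' $\{s_1=-L\}\cup\{s_2=-L\}$ as the density of the event ``the first detection of the pair occurs, at Alice, at time $t$'', and likewise $\Phi_B(t)$ for Bob; comparing components, $\Phi_A(t)$ is exactly the first line of \eqref{alice_dist}.

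Next I would analyze what happens after the first detection. Suppose it is particle 1 reaching Bob at time $t'$ (the other three cases being symmetric). By \eqref{2boundary} the only surviving components at $s_1=L$ are $\psi_{-\sigma_2}$, so freezing $(t_1,s_1)=(t',L)$ in the $j=2$ equation of \eqref{multiDir} makes $\widetilde\psi^{(t')}(t_2,s_2):=(\psi_{-+},\psi_{--})(t',L,t_2,s_2)$ a solution of the one-particle IBVP \eqref{ibvp_arrival} for particle 2, with initial data its restriction to $t_2=t'$, whose squared $L^2(\Omega)$ norm is $\int_\Omega j^{10}(t',L,t',s_2)\,ds_2$ --- the density of the first-detection event itself. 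The Theorem together with the one-particle arrival formula \eqref{mu} then gives the conditional density for particle 2 to reach Alice at time $t$ as $|\psi_{-+}(t',L,t,-L)|^2$ divided by that norm squared; multiplying by the first-detection density cancels the normalization, so the joint density in $(t',t)$ of ``particle 1 to Bob at $t'$, then particle 2 to Alice at $t$'' is simply $|\psi_{-+}(t',L,t,-L)|^2$, and the symmetric case ``particle 2 to Bob first, then particle 1 to Alice'' yields $|\psi_{+-}(t,-L,t',L)|^2$.

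Finally I would assemble the pieces. Up to a null set (trajectories undergoing two simultaneous detections), $\{T_A\le t\}$ is the disjoint union of (i) ``first detection at Alice, at a time $\le t$'', with density $\Phi_A$, and (ii) ``first detection at Bob, by either particle, at some time $t'$, followed by the other particle reaching Alice at a time $\le t$'', whose density is $\int_0^t\big(|\psi_{-+}(t',L,t,-L)|^2+|\psi_{+-}(t,-L,t',L)|^2\big)\,dt'$ once the joint densities above are integrated over $t'$. Summing gives \eqref{alice_dist}, and \eqref{bob_dist} follows by interchanging the two endpoints (equivalently $L\leftrightarrow -L$, relabeling which spinor index vanishes on which edge).

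The hard part will be the second step: one must know that the multi-time wave function of \eqref{multiDir}--\eqref{2boundary} is consistently defined off the equal-time diagonal and that the absorbing conditions continue to hold there, so that $\widetilde\psi^{(t')}$ genuinely satisfies the one-particle IBVP, and one must check that the several-particle detection rule of \cite{Tum22several} is compatible with this ``detect, then restart as a one-body problem'' reduction --- i.e. that the joint density of two successive detections is exactly the modulus squared of the relevant multi-time component, with no extra weight. The absence of probability backflow at all four edges, which is what turns these boundary fluxes into genuine first-passage densities, comes for free from \eqref{2boundary} just as in the one-particle case, so the remaining work is routine bookkeeping with the explicit one-particle solution.
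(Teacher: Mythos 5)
Your proof is correct and follows essentially the same route as the paper's: the first line of each formula is the outward boundary flux of $j^{10}$ and $j^{01}$ simplified by \eqref{2boundary} (which the paper takes directly from the detection rule of \cite{Tumulka} rather than re-deriving it from the continuity equation), and the second line comes from conditioning on a prior detection at the other endpoint, treating the collapsed conditional wave function as data for a one-body problem, and noting that its normalization cancels against the density of the conditioning event. The point you flag as the hard part --- that the multi-time components with one argument frozen on the boundary must satisfy the one-particle absorbing IBVP at \emph{unequal} times --- is precisely the step the paper passes over most quickly, so your caution there is well placed.
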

\begin{proof}

Let $T^f$ be the first time (according to $\Sigma_t$ foliation) at which a particle is registered by {\em any} of the
detectors at the boundary, let $Z^f = (T^f,S^f)$ with $S^f \in \{ L,-L \}$ be the corresponding detection event, and let $I^f \in \{1,2\}$ be 
the label of the registered particle. The proposed rule in \cite{Tumulka}, specialized to this particularly simple case, asserts that the joint probability distribution of $I^f$ and $Z^f$ is
\begin{eqnarray}
    \mbox{Prob}\left(I^f = 1, (t<T^f< t + dt, S^f = \pm L)\right) & = & ds\,dt\ \delta(s\mp L)\int_{-L}^L  \pm j^{10}(t,s,t,s_2) ds_2 \\ 
    \mbox{Prob}\left(I^f = 2, (t<T^f<t+ dt, S^f = \pm L)\right) & = & ds\,dt\ \delta(s\mp L)\int_{-L}^L  \pm j^{01}(t,s_1,t,s) ds_1.
\end{eqnarray}
Since these events are disjoint, it follows that the probability density for the arrival time of either particle at Alice {\em before the other one arrives at Bob} is
\begin{equation}
   \mu_A^f(t) = \int_{-L}^L - j^{10}(t,-L,t,s')-j^{01}(t,s',t,-L) \ ds,
\end{equation}
and similarly for Bob.  We note that when the boundary conditions \eqref{2boundary} are satisfied, the above becomes
\begin{equation}
 \mu_A^f(t) =  \int_{-L}^L\sum_{\sigma_2} |\psi_{+\,\sigma_2}(t,-L,t,s')|^2 + \sum_{\sigma_1}|\psi_{\sigma_1\,+}(t,s',t,-L)|^2\ ds' \geq 0.
\end{equation}

To account for all the particle arrivals at a given detector, we need to include the possibility of the particle arriving at that detector being the {\em second} one to arrive at any detector, e.g. one of the particles arrives at Bob's detector at a time $t'<t$ before the other one arrives at Alice's at time $t$.  In such a case, the Absorbing Boundary Rule proposed in \cite{Tumulka,Tum22several} stipulates that the detection of the particle by Bob causes the two-body wave function to undergo collapse, and the evolution of the remaining particles will henceforth be governed by the collapsed (i.e. the {\em conditional}) wave function. Let $\phi^{jB}(t',\cdot)$ denote the conditional wave function that results from Bob detecting particle labeled $j \in \{1,2\}$ at time $t'$.  Then
\begin{equation}
    \phi_{\sigma_2}^{1B}(t',s_2) = \frac{\psi_{-\,\sigma_2}(t',L,t',s_2)}{\sqrt{\int_{-L}^L ds' \sum_{\sigma_2} |\psi_{-\,\sigma_2}(t',L,t',s')|^2}},\quad \phi_{\sigma_1}^{2B}(t',s_1) = \frac{\psi_{\sigma_1\,-}(t',s_1,t',L)}{\sqrt{\int_{-L}^L ds' \sum_{\sigma_1} |\psi_{\sigma_1\, -}(t',s_1,t',L)|^2}}.
\end{equation}
(Note that the leftover spin components corresponding to the absorbed particle are zero thanks to the boundary conditions \eqref{2boundary}.) 

Let $T_A^j$ be the time of arrival of particle labeled $j$ at Alice's detector, and let $\hat{j}$ denote the label of the other particle, i.e. $\hat{1}=2$ and $\hat{2} = 1$.  By the formula for conditional probabilities, we have
\begin{eqnarray*}
    \mbox{Prob}(t<T_A& < & t+dt) = \sum_{j=1}^2\left(\mbox{Prob}(t<T_A^j<t+dt, I^f = j) + \mbox{Prob}(t<T_A^j<t+dt, I^f = \hat{j})\right) \\
    & = & \mu_A^f(t) + \sum_{j=1}^2 \int_0^t dt' \mbox{Density}(t'<T_B^{\hat{j}} < t' + dt' , I^f = \hat{j}) \mbox{Prob}(t<T_A^j<t+dt\ |\ T_B^{\hat{j}} = t', I^f = \hat{j}).
\end{eqnarray*}
To calculate the conditional probability that shows up on the last line, we need to use the conditional wave function, viz. for $j=2$,
$$
\mbox{Prob}(t<T_A^2<t+dt\ |\ T_B^{1} = t', I^f = 1) = \sum_{j=1}^{2} |\phi_-^{1B}(t,-L)|^2+|\phi_+^{1B}(t,-L)|^2
= \frac{|\psi_{-+}(t',L,t,-L)|^2}{\sum_{\sigma_2} \int_{-L}^L ds' |\psi_{-\,\sigma_2}(t',L,t',s')|^2},
$$
and similarly for $j=1$, while
\begin{equation}
  \mbox{Density}(t'<T_B^{1} < t' + dt' , I^f = 1) = \sum_{\sigma_2} \int_{-L}^L ds' |\psi_{-\,\sigma_2}(t',L,t',s')|^2,
\end{equation}
which cancels the denominator in the above, establishing the claim.
\end{proof}
\section{Detection vs. Arrival Time}
How well does the ABCD method capture the detection phenomena, that is to say, the interaction of the particles emitted by the source inside the domain with the (presumably macroscopic) detectors  places on the boundary of the domain?  If we imagine that the detector at the left endpoint of the interval $\Omega$ is controlled by Alice, and the one on the right by Bob, it should for example be the case that
 the distribution of arrival times at Alice is not immediately affected if Bob decides to move his detector by an appreciable amount, or if he simply switches his detector off (meaning the boundary condition is not imposed on Bob's side.) 
 For a single particle, this is easily obtained (see below) but for two particles it is not at all obvious, since moving one of the detectors or switching it off changes the two-body wave function, and therefore the individual trajectories of an entangled pair of particles change as soon as one of them enters the domain of influence of Bob's boundary point (since the velocity of each particle depends on the actual positions of both particles).  Nevertheless, one expects the distribution of arrival times not to change faster than what is allowed by relativity.
\begin{prop}
For a single particle, and for all $\epsilon \in (0,L/2)$, the distribution of arrival times at $s=-L$ is unaffected for $t<2L-\epsilon$ when the detector at $s=L$ is shifted to $s=L+\epsilon$.
\end{prop}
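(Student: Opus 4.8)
The plan is to deduce the statement from the finite speed of propagation already encoded in the explicit solution of Section~2, applied on two nested intervals. Write $\psi$ for the solution of \eqref{ibvp_arrival} on $[0,\infty)\times(-L,L)$, and write $\psi^{\epsilon}$ for the corresponding solution of the same IBVP but posed on $[0,\infty)\times(-L,L+\epsilon)$: the same Dirac/Klein--Gordon equation, the same initial data $\mathring{\psi}$ (still supported inside $(-L,L)$), the same absorbing boundary condition at the left endpoint $s=-L$, and the right-endpoint absorbing condition now imposed at $s=L+\epsilon$ rather than at $s=L$. Since the arrival-time density at the left detector is $\rho_A(t)=|\psi_+(t,-L)|^2$, and likewise for $\psi^{\epsilon}$, it suffices to prove that $\psi_+(t,-L)=\psi^{\epsilon}_+(t,-L)$ for every $t<2L-\epsilon$.

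The cleanest route is to read this off the explicit formula. Decompose $\psi_\pm=\Phi_\pm+\chi_\pm$ exactly as in the proof of the Theorem. The free part $\Phi_\pm$ and the left-endpoint datum $f_-(t)=-\Phi_-(t,-L)$ are built purely from the whole-line Klein--Gordon evolution of the \emph{fixed} data $\mathring\psi$, hence they do not depend on where the detectors sit. Now evaluate \eqref{diff} at $\sigma=+$ and $s=-L$: the ``first'' term collapses to $H(t-2L)f_+(t-2L)$, and in the series the $n$-th term contributes $H(t-(2n+2)L)\int_{(2n+2)L}^{t}\!\cdots$ together with $-iH(t-2nL)\int_{2nL}^{t}\!\cdots$. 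Every Heaviside factor here except $H(t-2\cdot 0\cdot L)=H(t)$ vanishes for $t<2L$, so for $t<2L$ one is left with the single surviving term, $\psi_+(t,-L)=\Phi_+(t,-L)-i\int_0^t f_-(t-\xi)R_1(\xi,0)\,d\xi$, an expression in which no detector position appears. Redoing the derivation of \eqref{diff} for the asymmetric interval $(-L,L+\epsilon)$ (the reflections now bounce between $-L$ and $L+\epsilon$) pushes the earliest Heaviside threshold that involves the right endpoint out to $t=2L+\epsilon$, while the left-endpoint datum is unchanged, $f^{\epsilon}_-(t)=-\Phi_-(t,-L)=f_-(t)$; hence for $t<2L+\epsilon$, in particular for $t<2L-\epsilon$, one obtains the \emph{same} formula $\psi^{\epsilon}_+(t,-L)=\Phi_+(t,-L)-i\int_0^t f_-(t-\xi)R_1(\xi,0)\,d\xi$. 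This gives $\psi_+(t,-L)=\psi^{\epsilon}_+(t,-L)$ and therefore $\rho_A=\rho^{\epsilon}_A$ on $[0,2L-\epsilon)$. The conceptual content is just finite propagation: the difference $w:=\psi^{\epsilon}|_{(-L,L)}-\psi$ solves the Dirac equation on $[0,\infty)\times(-L,L)$ with zero Cauchy data and $w_-(t,-L)=0$, and the backward domain of dependence of the point $(t,-L)$, intersected with the strip and closed under the one admissible reflection off $s=-L$, is the characteristic triangle with vertices $(t,-L)$, $(0,-L)$, $(0,-L+t)$, whose rightmost extent $-L+t$ stays $<L-\epsilon$ when $t<2L-\epsilon$; that triangle never meets the region $s\geq L$ where the two IBVPs differ, so $w$ vanishes on it.

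The one step requiring genuine care is the last ``$w$ vanishes on the triangle''. The absorbing condition at $s=-L$ is not energy-reflecting --- at that endpoint the Dirac current satisfies $j^1(t,-L)=|\psi_+(t,-L)|^2\geq 0$, so energy leaves the domain $(-L,L)$ but flows \emph{into} the truncated triangle --- and consequently the usual one-line energy inequality over a characteristic triangle does not close by itself. The honest resolution is precisely the explicit computation above, which I would take as the primary argument; alternatively one can promote the energy to control $\partial_s w$ as well and use the coupled boundary relation $\psi_+(t,-L)=\tfrac{i}{m}\partial_s\psi_-(t,-L)$ to cancel the offending boundary term, at the cost of assuming $C^1$ data. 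I expect this to be the only real subtlety; everything else is bookkeeping with the Heaviside functions and the small-argument bound $J_\nu(x)=O(x^\nu)$ already used in the proof of the Theorem.
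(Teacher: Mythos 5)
Your proposal is essentially the paper's argument: both rest on the explicit decomposition $\psi_\pm=\Phi_\pm+\chi_\pm$ and on reading off from the Heaviside factors that, at Alice's endpoint and for $t$ below the light-travel time of a round trip, the only surviving contribution is $\Phi_+$ plus the single $n=0$, left-reflection term, neither of which sees the right detector. Your evaluation of \eqref{diff} at $\sigma=+$, $s=-L$ is correct, and your observation that $\Phi_\pm$ and $f_-=-\Phi_-(\cdot,-L)$ are detector-independent is the same point the paper makes (via \eqref{sol_phi_byparts} or domain of dependence for \eqref{kg}). The one place where you assert rather than prove is the step ``redoing the derivation of \eqref{diff} for the asymmetric interval $(-L,L+\epsilon)$ pushes the earliest right-endpoint threshold out to $t=2L+\epsilon$'': the formulas \eqref{gensystem}--\eqref{diff} were derived only for an interval symmetric about the origin, and for an asymmetric interval the Laplace-transform system, the partial fractions, and the geometric-series expansion would all have to be recomputed. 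The paper sidesteps exactly this by a translation trick: shifting Bob to $L+\epsilon$ is equivalent (by translation invariance) to recentering so that the interval becomes $(-L',L')$ with $L'=L+\epsilon/2$ and the initial data becomes $\mathring{\psi}(\cdot+\epsilon/2)$; the symmetric-interval formula \eqref{chi0} then applies verbatim with $L'$ in place of $L$, and a change of variables in the surviving integral shows the $\epsilon$-dependence cancels. If you adopt that one-line reduction, your argument closes without any rederivation. Your secondary domain-of-dependence argument for $w=\psi^\epsilon-\psi$ is a nice conceptual framing, and you are right to flag that the naive energy estimate over the characteristic triangle does not close at an absorbing boundary (the current there is outgoing from $\Omega$, hence incoming for the truncated triangle), which is presumably why the paper relies on the explicit formula rather than an a priori estimate.
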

\begin{proof}
By \eqref{mu}, the density of arrival times at the boundary point $s=-L$ is
$
   \rho_A(t) =  - j^{1}(t,-L) = |\psi_{+}(t,-L)|^{2}
$.

We next note that due to the translation invariance of the equations, the situation with Bob's detector shifted to $L+\epsilon$ while keeping the particle source and Alice's detector fixed, is equivalent to shifting Bob's detector $\epsilon/2$ units to the right while shifting both the source and Alice's detector $\epsilon/2$ units to the left. This observation allows us to use the solution formulas developed in the above for $\Omega$ that was symmetric with respect to the origin, also to the case where that symmetry is broken due to the shift in Bob's detector.

After such a shift, therefore, the interval is $\Omega' = [-L',L']$, where $L' = L+\epsilon/2$, and the initial data is $\mathring{\psi}'_\pm(s) = \mathring{\psi}_\pm(s+\frac{\epsilon}{2})$. If we denote by $\psi'$ and $\Phi'$ the solutions to 
 \eqref{ibvp_arrival} and \eqref{kg} after the shift, 
from \eqref{chi0} we would then have that
\begin{equation}
\label{eq:ker}  
\psi'_+(t,s) = \Phi'_+(t,s) - H(t-L'+ s)X(t,s) + i H(t-L'- s) Y(t,s),
\end{equation}
where $X$ and $Y$ are certain integral operators acting on the initial data $\mathring{\psi}'$.  In particular,
$$
Y(t,s) = \sum_{\sigma\in\{+,-\}}\int_{L'+s}^t d\xi\int_{-L'-t+\xi}^{-L'+t-\xi}d\zeta \ K_\sigma(t,\xi,L'+s,L'+\zeta) \mathring{\psi}'_\sigma(\zeta)
$$
for certain explicit Kernels $K_\pm$.
Assuming $0<t<2L$, upon evaluating at $s=-L'$ the first Heaviside function in \eqref{eq:ker} is zero, therefore we are left with
$$
\psi_+'(t,-L') = \Phi'_+(t,-L') + i Y(t,-L')
$$
and we can check that changing the inner variable of integration to $\zeta'=\zeta+\frac{\epsilon}{2}$ we obtain
$$
Y(t,-L') = \sum_\sigma\int_{0}^t d\xi\int_{-L'-t+\xi}^{-L'+t-\xi}d\zeta K_\sigma(t,\xi,0,L'+\zeta) \mathring{\psi}'_\sigma(\zeta)
=\sum_\sigma \int_{0}^t d\xi\int_{-L-t+\xi}^{-L+t-\xi}d\zeta' K_\sigma(t,\xi,0,L+\zeta') \mathring{\psi}_\sigma(\zeta')
$$
which is manifestly independent of $\epsilon$. From the formula \eqref{sol_phi_byparts} one can similarly conclude that $\Phi'_+(t,-L')$ is independent of $\epsilon$ (or alternatively, one can use the domain of dependence property that holds for solutions to \eqref{kg}.)  It thus follows that $\psi'_+(t,-L')$ is independent of $\epsilon$, and therefore so is $\rho_A(t)$.
\end{proof}
As a simple corollary of the above, we note that if for the 2-body problem we start with a pure product initial state $\mathring{\psi}^1 \otimes \mathring{\psi}^2$, the solution would remain a pure product as well (the equations \eqref{multiDir} are for the non-interacting case), and therefore the same reasoning as above applies to show that the distribution of first arrival times at Alice will not be affected by a change in the position of Bob's detector.  

Next we can try to answer the same question for two particles whose wavefunction is in
a {\em maximally entangled state}, i.e., 
the solution $\psi$ to the two-body problem \eqref{multiDir} with boundary conditions \eqref{2boundary} whose initial data is
 \begin{equation}\label{2bodyinit}
     \psi(0,s_1,0,s_2) = \frac{1}{\sqrt{2}} \left( \mathring{\psi}^{1}\otimes\mathring{\psi}^{2} + \mathring{\psi}^{3}\otimes\mathring{\psi}^{4} \right),
 \end{equation}
 where $\mathring{\psi}^i \in C^0_c(\Omega)$ for $i=1,\dots,4$
 are four normalized mutually orthogonal 1-body wavefunctions. 
   Without much loss of generality, we can take these to be
    \begin{equation}\label{initial_data}
                \left\{
                \begin{array}{rcl}
                    \mathring{\psi}_{-}^1 & \equiv & 0 \\
                    \mathring{\psi}_{+}^1(s) & = & f_{\mu,\alpha}(s),
                \end{array}\right.  
                \left\{
                \begin{array}{rcl}
                    \mathring{\psi}_{-}^2(s) & = &f_{\mu,\alpha}(s) \\
                    \mathring{\psi}_{+}^2 & \equiv & 0,
                \end{array}\right.  
                \left\{
                \begin{array}{rcl}
                    \mathring{\psi}_{-}^3 & \equiv & 0 \\
                    \mathring{\psi}_{+}^3(s) & = &g_{\mu,\alpha}(s),
                \end{array} \right.  
                \left\{
                \begin{array}{rcl}
                    \mathring{\psi}_{-}^4(s) & = & g_{\mu,\alpha}(s) \\
                    \mathring{\psi}_{+}^4 & \equiv & 0.
                \end{array} \right.
    \end{equation}
    where $f_{\mu,\alpha}$ and $g_{\mu,\alpha}$ are  continuous functions compactly supported in $(\mu-\alpha,\mu+\alpha)$, with the property that $\int f^2 ds = \int g^2 ds = 1$ and $\int fg ds = 0$, and 
    we can assume
    $$ f_{\mu,\alpha}(s) = f_{0,\alpha}(s-\mu),\qquad g_{\mu,\alpha}(s) = g_{0,\alpha}(s-\mu). $$
    

For $i=1,\dots,4$, let $\psi^i$ be the solution to the one-body problem \eqref{rodi_eq} with initial data $\mathring{\psi}^i$.  Clearly, the solution to the 2-body system (\ref{multiDir}--\ref{2boundary}) is 
$\psi = \frac{1}{\sqrt{2}} \left( {\psi}^{1}\otimes {\psi}^{2} +{\psi}^{3}\otimes {\psi}^{4} \right)$.  

Recall that for $t < 2L$ the only non-zero term in the series \eqref{chipm} is the one with $n=0$.   Noticing that $\psi^i(t,-L) = 0$ due to the boundary condition, from \eqref{alice_dist} the distribution of first arrival times at Alice when Bob's detector is shifted to the right by the amount $\epsilon$, with $\alpha << \epsilon < L/2$ and letting $L' := L+\frac{\epsilon}{2}$, becomes
    \begin{eqnarray}  
    \label{step_3}
  \!\!\!\!  \mu_A^\epsilon(t) & := & \frac{1}{2}\int_{-L'}^{L'} ds 
    \Bigg(\Big|\psi_+^1(t,-L')\psi_-^2(t,s)+\psi_+^3(t,-L')\psi_-^4(t,s)\Big|^{2} +\Big|\psi_+^1(t,-L')\psi_+^2(t,s)+\psi_+^3(t,-L')\psi_+^4(t,s)\Big|^{2}\nonumber \\
    & & \mbox{\hspace{.25in}} +\Big|\psi_+^1(t,s)\psi_+^2(t,-L')+\psi_+^3(t,s)\psi_+^4(t,-L')\Big|^{2} + \left| \psi_-^1(t,s)\psi_+^2(t,-L')+\psi_-^3(t,s)\psi_+^4(t,-L')\right|^2\Bigg) \\
 & + &\!\!\!\! \frac{1}{2}\int_0^t dt' \Bigg( \Big|   \psi_-^1(t',L')\psi_+^2(t,-L') + \psi_-^3(t',L')  \psi_+^4(t,-L')   \Big|^2 + \Big| \psi_+^1(t,-L')\psi_-^2(t',L') + \psi_+^3(t,-L')  \psi_-^4(t',L')       \Big|^2\Bigg).\nonumber
    \end{eqnarray}
     In the above, $\psi^j_\pm$ are computed using initial data profiles $f_{-\epsilon/2,\alpha}$ and $g_{-\epsilon/2,\alpha}$. 
The goal would be to show that for $t<2L-\epsilon$ the above expression does not depend on $\epsilon$.  

It is fairly easy to verify this when both particles are {\em massless}, i.e. if we set $m=0$.  This is because in that case the initial data is just transported along characteristics, so that the boundary condition has no effect, and the integrals in \eqref{step_3} are easily seen to be independent of $\epsilon$.  

For massive particles, though, a more detailed analysis of the terms in \eqref{step_3} become necessary.  We plan to carry this out in future work.

\section{Summary and Outlook}
In this paper we have demonstrated that there exists an explicit, unique solution to the initial-boundary-value problem for the 1-body Dirac equation in an interval in one-dimensional space, with absorbing boundary conditions as proposed in \cite{Tumulka} modeling the presence of a detector at each endpoint, and that the solution inherits the regularity of the initial data and depends continuously on it.

We then studied the two-body problem corresponding to a pair of particles emanating from a source inside the interval, and obtained explicit formulas for the probability density of first arrival times at detectors situated at each end point of the interval. We posed the question whether it is possible to affect the distribution of arrival times at one detector by shifting the other detector, and to do so faster than allowed by relativity. We showed that  this is not possible if the initial data is a pure product state, of if the particles are massless. In future work we plan to show that even for an entangled pair of massive particles this is still not possible, so that superluminal signaling is generally ruled out in this model.



\section{Acknowledgements}
We would like to thank Shelly Goldstein, Roderich Tumulka, Markus N\"oth, and Siddhant Das for helpful discussions. We  especially thank Roderich Tumulka for pointing out an error in an earlier draft of our paper, and for suggesting to us the content of Proposition 1 and the outline of its proof.
\bibliographystyle{plain}

\end{document}